\documentclass[english]{article}
\usepackage[T1]{fontenc}
\usepackage[utf8]{inputenc}
\usepackage{amsmath,amsthm} 
\usepackage{graphicx}
\usepackage{amssymb}
\usepackage{enumitem}
\usepackage{natbib} 
\setcitestyle{round}
\usepackage{bussproofs}
\usepackage[letterpaper]{geometry} 
\geometry{verbose,tmargin=1in,bmargin=1in,lmargin=1in,rmargin=1in} 
\usepackage{thmtools,thm-restate}
\usepackage{tikz}
\usepackage{pgf} 
\usetikzlibrary{patterns,automata,arrows,shapes,snakes,topaths,trees,backgrounds,positioning,through,calc}
\usepackage{setspace}
\usepackage{multicol}
\usepackage{stmaryrd}  
\usepackage{colortbl}

\theoremstyle{definition}
\newtheorem{theorem}{Theorem}

\newtheorem{definition}[theorem]{Definition}
\newtheorem{lemma}[theorem]{Lemma}
  
\newtheorem{remark}[theorem]{Remark}

\usepackage{hyperref}
\hypersetup{colorlinks=true, citecolor=blue, linkcolor=blue}  

\onehalfspacing
 
\begin{document} 

\title{The incompatibility of the Condorcet winner and loser criteria with positive involvement and resolvability}

\author{Wesley H. Holliday \\ University of California, Berkeley \\ {\small wesholliday@berkeley.edu}}

\date{{\small Published in \textit{Economics Letters}, Vol.~262, 2026, 112868.}}

\maketitle

\begin{abstract} We prove that there is no preferential voting method satisfying the Condorcet winner and loser criteria, positive involvement (if a candidate $x$ wins in an initial preference profile, then adding a voter who ranks $x$ uniquely first cannot cause $x$ to lose), and $n$-voter resolvability (if $x$ initially ties for winning, then $x$ can be made the unique winner by adding some set of up to $n$ voters). This impossibility theorem holds for any positive integer $n$. It also holds if either the Condorcet loser criterion is replaced by independence of clones or positive involvement is replaced by negative involvement.\end{abstract}

\section{Introduction}\label{Intro}

In the theory of preferential voting, where voters submit rankings of the candidates, there is a fundamental tradeoff when selecting a voting method between ensuring desired \textit{variable-voter} properties, ensuring desired \textit{variable-candidate} properties, and ensuring that ties are not too common. In this note, we prove an impossibility theorem illustrating this tradeoff. The relevant variable-voter axiom is that of \textit{positive involvement} (\citealt{Saari1995}, \citealt{Perez2001}, \citealt{HP2021b}): if a candidate $x$ wins in an initial preference profile, then adding a voter who ranks $x$ uniquely first cannot cause $x$ to lose. Alternatively, we can use \textit{negative involvement} (\citealt{Fishburn1983}, \citealt{Perez2001}): if a candidate $x$ loses in an initial preference profile, then adding a voter who ranks $x$ uniquely last cannot cause $x$ to win.  According to P\'{e}rez \citeyearpar[p.~605]{Perez2001}, these axioms ``may be seen as the minimum to require concerning the coherence in the winning set when new voters~are~added.''\footnote{However, a weaker requirement is \textit{positive-negative involvement}: adding a voter who ranks an initial winner $x$ uniquely first and an initial loser $y$ uniquely last cannot cause $x$ to lose and $y$ to win. Another weaker requirement is \textit{bullet-vote positive involvement}: adding a voter who ranks an initial winner $x$ uniquely first---and leaves all other candidates unranked or ranks them as tied after $x$---cannot cause $x$ to lose.} On the candidate side, although the famous Condorcet winner and loser criteria (\citealt{Condorcet1785}) are not explicitly variable-candidate axioms, they can be motivated by considering candidates dropping out or joining the race: one might think that if $x$ would win (resp.~lose) in every two-candidate sub-election, then $x$ should still win (resp.~lose) in the full election---or else the election outcome might be seen as too contingent on which other losing candidates happened to stay in the race. Instead of the Condorcet loser criterion, we can use the explicitly variable-candidate axiom of \textit{independence of clones} (\citealt{Tideman1987}). Finally, for a formalization of the idea that a voting method should not make ties too common, we use the \textit{resolvability} axiom (\citealt{Tideman1986,Tideman1987}) that if $x$ is initially tied for winning, then $x$ can be made the \textit{unique} winner by adding a single new voter. As Tideman \citeyearpar[p.~25]{Tideman1986} puts it, ``Resolvability may be thought of as a condition that ties happen only `by coincidence', every tie being within one vote of being broken.'' In fact, we will ultimately consider, for any positive integer $n$, the axiom of \textit{$n$-voter resolvability}, whereby any tied winner can be made the unique winner by adding some set of up to $n$ new~voters.

\section{Setup}

Given infinite sets $\mathcal{V}$ and $\mathcal{X}$ of possible voters and possible candidates, respectively, a \textit{profile} $\mathbf{P}$ is a function assigning to each voter $i$ in some nonempty finite set $V(\mathbf{P})\subseteq\mathcal{V}$ a strict weak order $\mathbf{P}(i)$ on some nonempty finite set $X(\mathbf{P})\subseteq\mathcal{X}$ of candidates. We call $\mathbf{P}$ \textit{linear} if for each $i\in V(\mathbf{P})$,  $\mathbf{P}(i)$ is a linear order. All results to follow continue to hold if all profiles are linear. If $\mathbf{P}$ and $\mathbf{Q}$ are profiles with $V(\mathbf{P})\cap V(\mathbf{Q})=\varnothing$ and $X(\mathbf{P})=X(\mathbf{Q})$, then $\mathbf{P}+\mathbf{Q}$ is the profile with $V(\mathbf{P}+\mathbf{Q})=V(\mathbf{P})\cup V(\mathbf{Q})$ such that for $i\in V(\mathbf{P})$, $(\mathbf{P}+\mathbf{Q})(i)=\mathbf{P}(i)$, and for $i\in V(\mathbf{Q})$,  $(\mathbf{P}+\mathbf{Q})(i)=\mathbf{Q}(i)$.

Given $x,y\in X(\mathbf{P})$, the \textit{margin of $x$ vs.~$y$} in $\mathbf{P}$ is the number of voters who rank $x$ above $y$ minus the number who rank $y$ above $x$: $\text{Margin}_\mathbf{P}(x,y)=|\{i\in V(\mathbf{P})\mid (x,y)\in \mathbf{P}(i)\}| - |\{i\in V(\mathbf{P})\mid (y,x)\in\mathbf{P}(i)\}|$. The \textit{margin graph} $\mathcal{M}(\mathbf{P})$ of $\mathbf{P}$ is the weighted directed graph whose set of vertices is $X(\mathbf{P})$ with an edge from $x$ to $y$ when $\text{Margin}_\mathbf{P}(x,y)>0$, weighted by $\text{Margin}_\mathbf{P}(x,y)$.

A voting method is a function $F$ assigning to each profile $\mathbf{P}$ a nonempty subset $F(\mathbf{P})\subseteq X(\mathbf{P})$. This builds an assumption of ``universal domain'' into the definition of a voting method.

\begin{definition}\label{AxDef} Let $F$ be a voting method.
\begin{enumerate}
\item\label{CW} $F$ satisfies the \textit{Condorcet winner criterion} if for every profile $\mathbf{P}$ and $x\in X(\mathbf{P})$, if $\text{Margin}_\mathbf{P}(x,y)>0$ for all $y\in X(\mathbf{P})\setminus\{x\}$ (so $x$ is a \textit{Condorcet winner}), then $F(\mathbf{P})=\{x\}$.
\item\label{CL} $F$ satisfies the \textit{Condorcet loser criterion} if for every profile $\mathbf{P}$ and $x\in X(\mathbf{P})$, if $|X(\mathbf{P})|>1$  and $\text{Margin}_\mathbf{P}(y,x)>0$ for all $y\in X(\mathbf{P})\setminus\{x\}$ (so $x$ is a \textit{Condorcet loser}), then $x\not\in F(\mathbf{P})$.
\item\label{PosInv} $F$ satisfies \textit{positive involvement} if for every profile $\mathbf{P}$, $x\in X(\mathbf{P})$, and profile $\mathbf{P}^x$ with $V(\mathbf{P})\cap V(\mathbf{P}^x)=\varnothing$ and $X(\mathbf{P})=X(\mathbf{P}^x)$, if $x\in F(\mathbf{P})$ and $\mathbf{P}^x$ consists of a single voter who ranks $x$ uniquely first, then $x\in F(\mathbf{P}+\mathbf{P}^x)$.
\item\label{resolve} $F$ satisfies \textit{resolvability} if for every profile $\mathbf{P}$, if $|F(\mathbf{P})|>1$, then for each $x\in F(\mathbf{P})$, there is some single-voter profile $\mathbf{S}$ with $V(\mathbf{P})\cap V(\mathbf{S})=\varnothing$ and $X(\mathbf{P})=X(\mathbf{S})$ such that $F(\mathbf{P}+\mathbf{S})=\{x\}$.\footnote{In \citealt{Holliday2024}, we used a weaker version of this axiom in the impossibility theorem, since there the weaker version sufficed: if $|F(\mathbf{P})|>1$, then there is some single-voter profile $\mathbf{S}$ such that $|F(\mathbf{P}+\mathbf{S})|=1$. But Tideman's \citeyearpar{Tideman1986} original axiom requires that \textit{for each $x\in F(\mathbf{P})$}, there is some single-voter profile $\mathbf{S}$ such that $F(\mathbf{P}+\mathbf{S})=\{x\}$. This version facilitates the proof in this paper (\citealt{Tideman1986} does not include the assumption that $|F(\mathbf{P})|>1$, but the language of \citealt{Tideman1987} suggests it). In \citealt{Holliday2024}, we also discussed \textit{asymptotic resolvability}, but without the assumption of ordinal margin invariance made in that paper, asymptotic resolvability is more difficult to work with, so we do not discuss it here.}
\end{enumerate}
\end{definition}

Each of the axioms in Definition~\ref{AxDef} is independent of the others. The Borda count (\citealt{Borda1784}) satisfies all the axioms except the Condorcet winner criterion. Minimax (\citealt{Simpson1969}, \citealt{Kramer1977}) satisfies all of them except the Condorcet loser criterion.\footnote{Minimax does not satisfy resolvability if we restrict its domain to linear profiles (so the added voter must submit a linear order), but the related \textit{leximax} method (see \citealt{Brandt2025})---wherein the winning candidates are those whose vectors of margins, sorted in increasing order, are lexicographically largest---does satisfy resolvability when its domain is restricted (or not restricted) to linear profiles, as well as the Condorcet winner criterion and positive involvement (for all profiles).} Ranked Pairs (\citealt{Tideman1987}) satisfies all of them except positive involvement. Split Cycle (\citealt{HP2023}) satisfies all of them except resolvability. 

Below we prove that there is no voting method satisfying all the axioms in Definition~\ref{AxDef}. The proof involves reasoning about profiles with five candidates. It extends the strategy of \citealt{Holliday2024}, which showed that for four candidates, the axioms are jointly inconsistent with an additional axiom of \textit{ordinal margin invariance}. We leave open whether the axioms are jointly satisfiable without ordinal margin invariance if we only ask for a voting method defined on four-candidate profiles. If we restrict the domain to two- or three-candidate profiles, then the axioms are satisfied by the Minimax voting method.

The impossibility proof in the next section makes use of the following concept (cf.~\citealt{Kasper2019}, \citealt{Holliday2024}).

\begin{definition}\label{DefenseDef} Given a profile $\mathbf{P}$, the \textit{defensible set $D(\mathbf{P})$ of $\mathbf{P}$} is the set of candidates $x\in X(\mathbf{P})$ such that for all $y\in X(\mathbf{P})$, there is a $z\in X(\mathbf{P})$ such that $\text{Margin}_\mathbf{P}(z,y)\geq \text{Margin}_\mathbf{P}(y,x)$.
\end{definition}

The following lemma is essentially Lemma 3 of \citealt{Perez1995} (cf.~Lemma 1 of \citealt{Perez2001}), which is in turn an adaptation of a similar result of \citealt{Moulin1988}, but we include a proof for convenience.

\begin{lemma}[\citealt{Moulin1988}, \citealt{Perez1995}]\label{RefineLem} Let $F$ be a voting method satisfying the Condorcet winner criterion and positive involvement. Let $\mathbf{P}$  be a profile in which for all $x,x',y,y'\in X(\mathbf{P})$ with $x\neq x'$ and $y\neq y'$, if $\text{Margin}_\mathbf{P}(x,x')>\text{Margin}_\mathbf{P}(y,y')$, then $\text{Margin}_\mathbf{P}(x,x')>\text{Margin}_\mathbf{P}(y,y')+1$. Then $F(\mathbf{P})\subseteq D(\mathbf{P})$.
\end{lemma}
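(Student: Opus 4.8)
The plan is to prove the contrapositive: assuming $x \notin D(\mathbf{P})$, I would show $x \notin F(\mathbf{P})$. Unpacking Definition~\ref{DefenseDef}, $x \notin D(\mathbf{P})$ furnishes a witness $y \in X(\mathbf{P})$ with $\text{Margin}_{\mathbf{P}}(z, y) < \text{Margin}_{\mathbf{P}}(y, x)$ for every $z \in X(\mathbf{P})$. Writing $m = \text{Margin}_{\mathbf{P}}(y, x)$ and taking $z = x$ shows $m > 0$, so $x \neq y$ and $y$ beats $x$. Let $d = \max_{z \neq y} \text{Margin}_{\mathbf{P}}(z, y)$ be the margin of $y$'s strongest defeat (the maximum is over a nonempty set since $x \neq y$). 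The witness condition gives $d < m$, and this is exactly where the separation hypothesis on $\mathbf{P}$ enters: since $d$ and $m$ are both margins between distinct candidates and $d < m$, the gap condition forces $m \geq d + 2$. Hence the open interval $(d, m)$ contains an integer, and I would fix $k = \max(d+1, 0)$, so that $0 \leq k \leq m - 1$ and $k > d$.

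Next I would adjoin to $\mathbf{P}$ a block of $k$ fresh voters, each submitting the linear ballot $x > y > \cdots$ (so $x$ uniquely first, $y$ second, and the remaining candidates in any fixed order), obtaining a profile $\mathbf{Q}$. The purpose of this ballot lies in its effect on margins: each added voter lowers $\text{Margin}(y, x)$ by $1$ while raising $\text{Margin}(y, w)$ by $1$ for every $w \notin \{x, y\}$. A direct count then gives $\text{Margin}_{\mathbf{Q}}(y, x) = m - k$ and, for each $w \notin \{x, y\}$, $\text{Margin}_{\mathbf{Q}}(y, w) = \text{Margin}_{\mathbf{P}}(y, w) + k \geq -d + k$. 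Since $d < k < m$, both quantities are strictly positive, so $y$ beats every other candidate in $\mathbf{Q}$; that is, $y$ is a Condorcet winner in $\mathbf{Q}$. (If $k = 0$, then $d < 0$, and $y$ is already a Condorcet winner in $\mathbf{P}$ itself.)

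Finally I would combine the two axioms. Every added voter ranks $x$ uniquely first, so if $x \in F(\mathbf{P})$, then applying positive involvement once per added voter along the chain from $\mathbf{P}$ to $\mathbf{Q}$ yields $x \in F(\mathbf{Q})$. But $y$ is a Condorcet winner in $\mathbf{Q}$, so the Condorcet winner criterion forces $F(\mathbf{Q}) = \{y\}$, and since $x \neq y$ this contradicts $x \in F(\mathbf{Q})$. Therefore $x \notin F(\mathbf{P})$, establishing the contrapositive and hence $F(\mathbf{P}) \subseteq D(\mathbf{P})$. The one genuinely load-bearing step, and the place I expect to have to be careful, is the margin bookkeeping in tandem with the separation hypothesis: the construction succeeds only because an integer $k$ can be slotted strictly between $d$ and $m$, and the assumption that distinct margins differ by at least $2$ is precisely what guarantees this. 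Intuitively, the added $x$-first voters must be numerous enough ($k > d$) to lift $y$ over all of $x$'s rivals, yet few enough ($k < m$) that they do not simultaneously rescue $x$ against $y$; the gap condition is exactly what reconciles these two demands.
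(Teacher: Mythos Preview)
Your proof is correct and follows essentially the same route as the paper's: pick the witness $y$ from the failure of defensibility, let $d$ be $y$'s worst incoming margin, use the separation hypothesis to squeeze an integer number of $x$-first/$y$-second voters between $d$ and $m=\text{Margin}_\mathbf{P}(y,x)$, and then combine positive involvement with the Condorcet winner criterion for a contradiction. The only cosmetic differences are that the paper frames it as a direct contradiction from $x\in F(\mathbf{P})$ rather than a contrapositive, and handles the $d<0$ case by an explicit branch before adding $d+1$ voters, whereas you fold both cases into $k=\max(d+1,0)$.
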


\begin{proof} Let $F$ be such a method and $x\in F(\mathbf{P})$. For contradiction, suppose $x\not\in D(\mathbf{P})$. Then there is a $y\in X(\mathbf{P})$ such that for every $z\in X(\mathbf{P})$, $\text{Margin}_\mathbf{P}(z,y)< \text{Margin}_\mathbf{P}(y,x)$, which implies $y\neq x$. Let $k= \max\{\text{Margin}_\mathbf{P}(z,y) \mid z\in X(\mathbf{P}),z\neq y\}$, so $k<\text{Margin}_\mathbf{P}(y,x)$. If $k< 0$, then $y$ is the Condorcet winner in $\mathbf{P}$, so $F(\mathbf{P})=\{y\}$ by the Condorcet winner criterion, which contradicts $x\in F(\mathbf{P})$. Thus, $k\geq 0$. By our assumption about $\mathbf{P}$ in the lemma, $k+1<\text{Margin}_\mathbf{P}(y,x)$. Let $\mathbf{P}'$ be obtained from $\mathbf{P}$ by adding $k+1$ voters who rank $x$ uniquely first and $y$ uniquely second, followed by any linear order of $X(\mathbf{P})\setminus \{x,y\}$. Hence by repeated application of positive involvement, $x\in F(\mathbf{P}')$. But $y$ is the Condorcet winner in $\mathbf{P}'$, so $F(\mathbf{P}')=\{y\}$ by the Condorcet winner criterion, which contradicts $x\in F(\mathbf{P}')$. Thus, $x\in D(\mathbf{P})$.\end{proof}

\section{Result}

\begin{theorem}\label{MainThm} There is no voting method satisfying the Condorcet winner and loser criteria, positive involvement, and resolvability.
\end{theorem}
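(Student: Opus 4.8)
The plan is to make Lemma~\ref{RefineLem} do the heavy lifting and to replace the ordinal margin invariance of the four-candidate argument by a parity device that lets me control exact margins. First I would restrict attention to profiles $\mathbf{P}$ all of whose pairwise margins $\text{Margin}_\mathbf{P}(\cdot,\cdot)$ are \emph{even}. Any such profile satisfies the hypothesis of Lemma~\ref{RefineLem} automatically, since two distinct even integers differ by at least $2$; hence $F(\mathbf{P})\subseteq D(\mathbf{P})$. The crucial point is that adding one voter who submits a \emph{linear} order changes every pairwise margin by exactly $\pm 1$, turning an even-margin profile into an \emph{odd}-margin profile, which is again spread out in the sense of the lemma. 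Thus Lemma~\ref{RefineLem} applies both to my base profile and to any profile obtained from it by a single tie-breaking voter, and this is exactly the leverage that ordinal margin invariance provided before. (It is also worth recording that positive involvement together with the Condorcet winner criterion already excludes any candidate $x$ that some $w$ beats by a positive margin while $w$ itself is beaten by no one: an $x$-first, $w$-second voter promotes $w$ to a strict Condorcet winner, yet positive involvement keeps $x$ a winner---a contradiction. This is a special case of $F(\mathbf{P})\subseteq D(\mathbf{P})$, so I would fold it into the defensible-set bookkeeping rather than state it separately.)

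Second, I would fix an explicit even-margin profile $\mathbf{P}$ on five candidates with no Condorcet winner, engineered so that Lemma~\ref{RefineLem} and the Condorcet loser criterion together pin $F(\mathbf{P})$ into a small target set: every candidate outside the target is either a Condorcet loser (excluded by the loser criterion) or is beaten by some $y$ by strictly more than $y$'s largest incoming margin, hence lies outside $D(\mathbf{P})$ by Definition~\ref{DefenseDef} and so outside $F(\mathbf{P})$. The fifth candidate is what supplies enough room to force this localization using exact margins rather than merely their order; the margins would be chosen even and well-separated so that the defensible-set computation is robust.

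Third comes the contradiction, which must genuinely use resolvability, since Split Cycle satisfies the other three axioms. The idea is to arrange $\mathbf{P}$ so that, for a candidate $x$ in the target set, no single added voter can make $x$ the \emph{unique} winner: any voter that isolates $x$ (making the odd profile $\mathbf{P}+\mathbf{S}$ spread out) either leaves some rival a strict Condorcet winner---so $F(\mathbf{P}+\mathbf{S})$ is that rival by the Condorcet winner criterion, not $\{x\}$---or renders $x$ indefensible in $\mathbf{P}+\mathbf{S}$, so $x\notin D(\mathbf{P}+\mathbf{S})\supseteq F(\mathbf{P}+\mathbf{S})$, or turns $x$ into a Condorcet loser. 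The five-candidate geometry is meant to provide the slack that makes a \emph{single} voter insufficient to both isolate $x$ and suppress all the rival ``dominators'' at once. If, in addition, $x\in F(\mathbf{P})$ with $\lvert F(\mathbf{P})\rvert>1$, then the strong per-candidate form of resolvability in Definition~\ref{AxDef} demands precisely such an isolating voter, and we are done.

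The honest main obstacle is twofold. The first part is producing the explicit five-candidate profile and verifying the finite case analysis: that for \emph{every} admissible tie-breaking voter---including weak-order voters, for which not all margins flip parity and the spread-out hypothesis must be re-checked pair by pair---one of the three escape routes is forced, and that these routes are exhaustive. The second, and I expect harder, part is the branch in which $F(\mathbf{P})$ is already a singleton: positive involvement \emph{protects} a singleton winner $x$, since $x$-first voters only entrench $x$, so there is no tie for resolvability to break and no way to ``catch $x$ out'' within a single profile. Closing this branch requires routing through auxiliary profiles anchored at genuine Condorcet winners (where $F$ is determined) and propagating membership forward by positive involvement until the Condorcet and defensibility constraints collide, in the style of the Moulin--P\'{e}rez chain underlying Lemma~\ref{RefineLem}. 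Finally I would extend single-voter to $n$-voter resolvability by rescaling $\mathbf{P}$ so that all relevant margin gaps exceed $2n$---then adding $n$ voters perturbs each margin by at most $n$ and the spread-out hypothesis survives, so the same argument runs---and obtain the stated variants (independence of clones in place of the loser criterion, negative involvement in place of positive involvement) from the corresponding dual constructions.
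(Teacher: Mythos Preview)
Your proposal assembles the right raw materials---Lemma~\ref{RefineLem}, the five-candidate setting, and the ``spread-out margins'' device so that the defensible set is robust under adding one or two unknown voters---but the overall architecture is not a proof, and it diverges from how the paper actually argues.

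The central gap is in your step~3 and the ``singleton branch'' you flag afterward. Your plan is to exhibit a profile $\mathbf{P}$ with $x\in F(\mathbf{P})$, $|F(\mathbf{P})|>1$, and then show that for \emph{every} single-voter $\mathbf{S}$ one of your three escape routes fires, so resolvability is violated outright. But the axioms at hand give you no way to force $|F(\mathbf{P})|>1$: Condorcet winner/loser and Lemma~\ref{RefineLem} only yield \emph{upper bounds} on $F(\mathbf{P})$, never lower bounds beyond nonemptiness. If $F$ happens to return a singleton at your engineered $\mathbf{P}$, resolvability is vacuous there and your argument stalls. You acknowledge this and say the fix is ``routing through auxiliary profiles \dots\ in the style of the Moulin--P\'erez chain,'' but that sentence is not a construction; it is precisely where all the work lies.

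The paper's proof does not try to violate resolvability directly at all. It uses resolvability \emph{constructively}, as an extraction principle inside a chain of five profiles $\mathbf{P}_1,\dots,\mathbf{P}_5$. At $\mathbf{P}_1$ the Condorcet loser criterion plus Lemma~\ref{RefineLem} already force $F(\mathbf{P}_1)=\{a\}$. Adding $a$-first voters (positive involvement) gives $a\in F(\mathbf{P}_2)$; resolvability then hands you some unknown $\mathbf{S}_2$ with $F(\mathbf{P}_2+\mathbf{S}_2)=\{a\}$. Now $\mathbf{P}_2$ was obtained from $\mathbf{P}_3$ by adding $d$-first voters, so positive involvement read backward yields $d\notin F(\mathbf{P}_3+\mathbf{S}_2)$. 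Because the margins in $\mathcal{M}_3$ are separated by at least $4$, the unknown single voter $\mathbf{S}_2$ cannot disturb the defensible-set computation, and Lemma~\ref{RefineLem} pins $F(\mathbf{P}_3+\mathbf{S}_2)=\{b\}$. One more pass of the same pattern (add $b$-first voters, extract $\mathbf{S}_4$, peel off $d$-first voters, apply Lemma~\ref{RefineLem} with margins now separated by at least $6$ to absorb two unknown voters) forces $F(\mathbf{P}_5+\mathbf{S}_2+\mathbf{S}_4)=\{d\}$ while simultaneously excluding $d$---the contradiction. Your robustness idea is exactly what makes the $\mathbf{S}_i$ harmless, but it is deployed to make the chain insensitive to the tie-breakers, not to canvass all tie-breakers at a single profile. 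Your rescaling remark for $n$-voter resolvability does match the paper's Theorem~\ref{nvoter}.
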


\begin{proof} Suppose for contradiction that there is such a voting method $F$. Let $\mathbf{P}_1$ be a profile as in Figure \ref{Prof1}, whose margin graph is $\mathcal{M}_1$ in Figure \ref{KeyFig}.\footnote{\label{P1note}Rather than using this specific $\mathbf{P}_1$, one may reason more abstractly as follows. By Debord's Theorem (\citealt{Debord1987}), there is a profile $\mathbf{Q}$ whose margin graph is $\mathcal{M}_1$. Now let $\mathbf{Q}^+$ be obtained from $\mathbf{Q}$ by adding, for each ranking of the five candidates, $7$ new voters with that ranking; then the margin graph of $\mathbf{Q}^+$ is still $\mathcal{M}_1$, and $\mathbf{Q}^+$ contains all the rankings needed for the rest of the argument, so one may use $\mathbf{Q}^+$ in place of $\mathbf{P}_1$. After the initial construction of this proof using another version of $\mathcal{M}_1$ and different numbers of rankings to add/subtract to obtain an appropriate sequence $\mathcal{M}_1,\dots,\mathcal{M}_5$ of margin graphs for the proof, an integer linear program was used to find a small profile $\mathbf{P}_1$ and numbers of rankings to add/subtract to keep the total number of voters down while still satisfying all the constraints needed for each step of the proof. Even fewer voters suffice (e.g., $230$ instead of $261$ in $\mathbf{P}_4$, the largest $\mathbf{P}_i$) if, at the expense of elegance, we allow more distinct ranking types in the profiles.}  Let $\mathbf{P}_2,\dots,\mathbf{P}_5$ be obtained from $\mathbf{P}_1$ by adding and subtracting the profiles shown in the equations in Figure \ref{KeyFig}: $\mathbf{P}_2$ is obtained from $\mathbf{P}_1$ by adding 26 new voters with the ranking $adbec$; $\mathbf{P}_3$ is obtained from $\mathbf{P}_2$ by removing $7$ voters with the ranking $daceb$, etc. Note there are indeed such $7$ voters in $\mathbf{P}_1$, and for the purposes of $\mathbf{P}_5$, there are $7$ voters in $\mathbf{P}_1$ with the ranking $dbace$. Then $\mathcal{M}_2,\dots,\mathcal{M}_5$ are the margin graphs of $\mathbf{P}_2,\dots,\mathbf{P}_5$, respectively. 

\begin{figure}[h]
\begin{center}
  \begin{tabular}{cccccc}
  69 & 64 & 46 & 20 & 18 & 2\\
  \hline
  $d$ & $e$ & $b$ & $c$ & $d$ & $e$ \\
  $a$ & $b$ & $c$ & $d$ & $b$ & $d$ \\
  $c$ & $a$ & $a$ & $e$ & $a$ & $c$ \\
  $e$ & $c$ & $e$ & $b$ & $c$ & $b$ \\
  $b$ & $d$ & $d$ & $a$ & $e$ & $a$ \\
  \end{tabular}
  \end{center}
\caption{the profile $\mathbf{P}_1$ used in the proof of Theorem \ref{MainThm}. The number at the top of each column is the number of voters submitting the ranking in that column.}\label{Prof1}
\end{figure}

\begin{figure}[h!]
 \begin{center}

  \begin{minipage}{1.5in}
  \begin{center}
  $\mathcal{M}_1$
  \end{center}
  \begin{tikzpicture}

  \node[circle,draw,fill=gray!75, minimum width=0.25in] at (1.5,3) (a) {$a$};
  \node[circle,draw,minimum width=0.25in] at (3,1.5) (b) {$b$};
  \node[circle,draw,minimum width=0.25in] at (2.3,-0.5) (c) {$c$};
  \node[circle,draw,fill=gray!75, minimum width=0.25in] at (0.7,-0.5) (d) {$d$};
  \node[circle,draw,minimum width=0.25in] at (0,1.5) (e) {$e$};

  \path[->,draw,thick] (a) to node[fill=white,inner sep=1pt,pos=0.55] {\small $83$} (c);
  \path[->,draw,thick] (a) to node[fill=white,inner sep=1pt,pos=0.55] {\small $1$} (d);
  \path[->,draw,thick] (a) to node[fill=white,inner sep=1pt] {\small $47$} (e);
  \path[->,draw,thick] (b) to node[fill=white,inner sep=1pt] {\small $81$} (a);
  \path[->,draw,thick] (b) to node[fill=white,inner sep=1pt] {\small $37$} (c);
  \path[->,draw,thick] (b) to node[fill=white,inner sep=1pt,pos=0.55] {\small $1$} (d);
  \path[->,draw,thick] (c) to node[fill=white,inner sep=1pt] {\small $41$} (d);
  \path[->,draw,thick] (c) to node[fill=white,inner sep=1pt,pos=0.45] {\small $87$} (e);
  \path[->,draw,thick] (e) to node[fill=white,inner sep=1pt] {\small $91$} (b);
  \path[->,draw,thick] (e) to node[fill=white,inner sep=1pt] {\small $5$} (d);

  \end{tikzpicture}

  \end{minipage}
$\;+$\begin{tabular}{c}
  26 \\
  \hline
  $a$ \\
  $d$ \\
  $b$ \\
  $e$ \\
  $c$
\end{tabular}$=\;\;$  \begin{minipage}{1.5in}
  \begin{center}
  $\mathcal{M}_2$
  \end{center}
  \begin{tikzpicture}

  \node[circle,draw,fill=gray!75, minimum width=0.25in] at (1.5,3) (a) {$a$};
  \node[circle,draw,fill=gray!75, minimum width=0.25in] at (3,1.5) (b) {$b$};
  \node[circle,draw,minimum width=0.25in] at (2.3,-0.5) (c) {$c$};
  \node[circle,draw,fill=gray!75, minimum width=0.25in] at (0.7,-0.5) (d) {$d$};
  \node[circle,draw,minimum width=0.25in] at (0,1.5) (e) {$e$};

  \path[->,draw,thick] (a) to node[fill=white,inner sep=1pt,pos=0.55] {\small $109$} (c);
  \path[->,draw,thick] (a) to node[fill=white,inner sep=1pt,pos=0.55] {\small $27$} (d);
  \path[->,draw,thick] (a) to node[fill=white,inner sep=1pt] {\small $73$} (e);
  \path[->,draw,thick] (b) to node[fill=white,inner sep=1pt] {\small $55$} (a);
  \path[->,draw,thick] (b) to node[fill=white,inner sep=1pt] {\small $63$} (c);
  \path[->,draw,thick] (c) to node[fill=white,inner sep=1pt] {\small $15$} (d);
  \path[->,draw,thick] (c) to node[fill=white,inner sep=1pt,pos=0.45] {\small $61$} (e);
  \path[->,draw,thick] (d) to node[fill=white,inner sep=1pt,pos=0.45] {\small $25$} (b);
  \path[->,draw,thick] (d) to node[fill=white,inner sep=1pt] {\small $21$} (e);
  \path[->,draw,thick] (e) to node[fill=white,inner sep=1pt] {\small $65$} (b);

  \end{tikzpicture}

  \end{minipage}
$=$\begin{tabular}{c}
  7 \\
  \hline
  $d$ \\
  $a$ \\
  $c$ \\
  $e$ \\
  $b$
\end{tabular}$+\;\;$  \begin{minipage}{1.5in}
  \begin{center}
  $\mathcal{M}_3$
  \end{center}
  \begin{tikzpicture}

  \node[circle,draw,minimum width=0.25in] at (1.5,3) (a) {$a$};
  \node[circle,draw,fill=gray!75, minimum width=0.25in] at (3,1.5) (b) {$b$};
  \node[circle,draw,minimum width=0.25in] at (2.3,-0.5) (c) {$c$};
  \node[circle,draw,fill=gray!75, minimum width=0.25in] at (0.7,-0.5) (d) {$d$};
  \node[circle,draw,minimum width=0.25in] at (0,1.5) (e) {$e$};

  \path[->,draw,thick] (a) to node[fill=white,inner sep=1pt,pos=0.55] {\small $102$} (c);
  \path[->,draw,thick] (a) to node[fill=white,inner sep=1pt,pos=0.55] {\small $34$} (d);
  \path[->,draw,thick] (a) to node[fill=white,inner sep=1pt] {\small $66$} (e);
  \path[->,draw,thick] (b) to node[fill=white,inner sep=1pt] {\small $62$} (a);
  \path[->,draw,thick] (b) to node[fill=white,inner sep=1pt] {\small $70$} (c);
  \path[->,draw,thick] (c) to node[fill=white,inner sep=1pt] {\small $22$} (d);
  \path[->,draw,thick] (c) to node[fill=white,inner sep=1pt,pos=0.45] {\small $54$} (e);
  \path[->,draw,thick] (d) to node[fill=white,inner sep=1pt,pos=0.45] {\small $18$} (b);
  \path[->,draw,thick] (d) to node[fill=white,inner sep=1pt] {\small $14$} (e);
  \path[->,draw,thick] (e) to node[fill=white,inner sep=1pt] {\small $58$} (b);

  \end{tikzpicture}

  \end{minipage}

  \vspace{0.5cm}

  \begin{minipage}{1.5in}
  \begin{center}
  $\mathcal{M}_3$
  \end{center}
  \begin{tikzpicture}

  \node[circle,draw,minimum width=0.25in] at (1.5,3) (a) {$a$};
  \node[circle,draw,fill=gray!75, minimum width=0.25in] at (3,1.5) (b) {$b$};
  \node[circle,draw,minimum width=0.25in] at (2.3,-0.5) (c) {$c$};
  \node[circle,draw,fill=gray!75, minimum width=0.25in] at (0.7,-0.5) (d) {$d$};
  \node[circle,draw,minimum width=0.25in] at (0,1.5) (e) {$e$};

  \path[->,draw,thick] (a) to node[fill=white,inner sep=1pt,pos=0.55] {\small $102$} (c);
  \path[->,draw,thick] (a) to node[fill=white,inner sep=1pt,pos=0.55] {\small $34$} (d);
  \path[->,draw,thick] (a) to node[fill=white,inner sep=1pt] {\small $66$} (e);
  \path[->,draw,thick] (b) to node[fill=white,inner sep=1pt] {\small $62$} (a);
  \path[->,draw,thick] (b) to node[fill=white,inner sep=1pt] {\small $70$} (c);
  \path[->,draw,thick] (c) to node[fill=white,inner sep=1pt] {\small $22$} (d);
  \path[->,draw,thick] (c) to node[fill=white,inner sep=1pt,pos=0.45] {\small $54$} (e);
  \path[->,draw,thick] (d) to node[fill=white,inner sep=1pt,pos=0.45] {\small $18$} (b);
  \path[->,draw,thick] (d) to node[fill=white,inner sep=1pt] {\small $14$} (e);
  \path[->,draw,thick] (e) to node[fill=white,inner sep=1pt] {\small $58$} (b);

  \end{tikzpicture}

  \end{minipage}
$\;+$\begin{tabular}{c}
  23 \\
  \hline
  $b$ \\
  $d$ \\
  $e$ \\
  $a$ \\
  $c$
\end{tabular}$=\;\;$  \begin{minipage}{1.5in}
  \begin{center}
  $\mathcal{M}_4$
  \end{center}
  \begin{tikzpicture}

  \node[circle,draw,minimum width=0.25in] at (1.5,3) (a) {$a$};
  \node[circle,draw,fill=gray!75, minimum width=0.25in] at (3,1.5) (b) {$b$};
  \node[circle,draw,minimum width=0.25in] at (2.3,-0.5) (c) {$c$};
  \node[circle,draw,fill=gray!75, minimum width=0.25in] at (0.7,-0.5) (d) {$d$};
  \node[circle,draw,minimum width=0.25in] at (0,1.5) (e) {$e$};

  \path[->,draw,thick] (a) to node[fill=white,inner sep=1pt,pos=0.55] {\small $125$} (c);
  \path[->,draw,thick] (a) to node[fill=white,inner sep=1pt,pos=0.55] {\small $11$} (d);
  \path[->,draw,thick] (a) to node[fill=white,inner sep=1pt] {\small $43$} (e);
  \path[->,draw,thick] (b) to node[fill=white,inner sep=1pt] {\small $85$} (a);
  \path[->,draw,thick] (b) to node[fill=white,inner sep=1pt] {\small $93$} (c);
  \path[->,draw,thick] (b) to node[fill=white,inner sep=1pt,pos=0.55] {\small $5$} (d);
  \path[->,draw,thick] (c) to node[fill=white,inner sep=1pt,pos=0.45] {\small $31$} (e);
  \path[->,draw,thick] (d) to node[fill=white,inner sep=1pt] {\small $1$} (c);
  \path[->,draw,thick] (d) to node[fill=white,inner sep=1pt] {\small $37$} (e);
  \path[->,draw,thick] (e) to node[fill=white,inner sep=1pt] {\small $35$} (b);

  \end{tikzpicture}

  \end{minipage}
$=$\begin{tabular}{c}
  7 \\
  \hline
  $d$ \\
  $b$ \\
  $a$ \\
  $c$ \\
  $e$
\end{tabular}$+\;\;$  \begin{minipage}{1.5in}
  \begin{center}
  $\mathcal{M}_5$
  \end{center}
  \begin{tikzpicture}

  \node[circle,draw,minimum width=0.25in] at (1.5,3) (a) {$a$};
  \node[circle,draw,minimum width=0.25in] at (3,1.5) (b) {$b$};
  \node[circle,draw,minimum width=0.25in] at (2.3,-0.5) (c) {$c$};
  \node[circle,draw,fill=gray!75, minimum width=0.25in] at (0.7,-0.5) (d) {$d$};
  \node[circle,draw,minimum width=0.25in] at (0,1.5) (e) {$e$};

  \path[->,draw,thick] (a) to node[fill=white,inner sep=1pt,pos=0.55] {\small $118$} (c);
  \path[->,draw,thick] (a) to node[fill=white,inner sep=1pt,pos=0.55] {\small $18$} (d);
  \path[->,draw,thick] (a) to node[fill=white,inner sep=1pt] {\small $36$} (e);
  \path[->,draw,thick] (b) to node[fill=white,inner sep=1pt] {\small $78$} (a);
  \path[->,draw,thick] (b) to node[fill=white,inner sep=1pt] {\small $86$} (c);
  \path[->,draw,thick] (b) to node[fill=white,inner sep=1pt,pos=0.55] {\small $12$} (d);
  \path[->,draw,thick] (c) to node[fill=white,inner sep=1pt] {\small $6$} (d);
  \path[->,draw,thick] (c) to node[fill=white,inner sep=1pt,pos=0.45] {\small $24$} (e);
  \path[->,draw,thick] (d) to node[fill=white,inner sep=1pt] {\small $30$} (e);
  \path[->,draw,thick] (e) to node[fill=white,inner sep=1pt] {\small $42$} (b);

  \end{tikzpicture}

  \end{minipage}

  \end{center}
\caption{margin graphs used in the proof of Theorem~\ref{MainThm} with defensible candidates shaded gray.  An edge from candidate $x$ to candidate $y$ labeled by $k$ indicates that the margin of $x$ vs.~$y$ is $k$.}\label{KeyFig}
\end{figure}

Now in $\mathbf{P}_1$, $a$ is the only defensible candidate who is not a Condorcet loser (see $\mathcal{M}_1$ in Figure~\ref{KeyFig}). Thus, by the Condorcet loser criterion and Lemma \ref{RefineLem}, $F(\mathbf{P}_1)=\{a\}$. Since $\mathbf{P}_2$ is obtained from $\mathbf{P}_1$ by adding voters all of whom rank $a$ uniquely first,  together $F(\mathbf{P}_1)=\{a\}$ and repeated application of positive involvement imply $a\in F(\mathbf{P}_2)$. Then by resolvability, there is some ``empty''\footnote{Here $\mathbf{S}_2$ is not literally an empty profile; if we already have $F(\mathbf{P}_2)=\{a\}$, then let `$\mathbf{P}_2+\mathbf{S}_2$' denote $\mathbf{P}_2$, `$\mathbf{P}_3+\mathbf{S}_2$' denote $\mathbf{P}_3$, etc. The analogous convention applies to $\mathbf{S}_4$ below if $F(\mathbf{P}_4+\mathbf{S}_2)=\{b\}$.} or single-voter profile $\mathbf{S}_2$ such that ${F(\mathbf{P}_2+\mathbf{S}_2)=\{a\}}$. Since $\mathbf{P}_2+\mathbf{S}_2$ is obtained from $\mathbf{P}_3+\mathbf{S}_2$ by adding voters all of whom rank $d$ uniquely first, together ${F(\mathbf{P}_2+\mathbf{S}_2)=\{a\}}$ and positive involvement imply $d\not\in F(\mathbf{P}_3+\mathbf{S}_2)$. Since any difference between weights of two distinct edges in $\mathcal{M}_3$ is of size at least $4$, no matter which single-voter (or empty) profile $\mathbf{S}_2$ is, $b$ is the only defensible candidate besides $d$ in $\mathbf{P}_3+\mathbf{S}_2$, and $\mathbf{P}_3+\mathbf{S}_2$ satisfies the margin separation assumption of Lemma~\ref{RefineLem},  so $d\not\in F(\mathbf{P}_3+\mathbf{S}_2)$ implies $F(\mathbf{P}_3+\mathbf{S}_2)=\{b\}$ by Lemma~\ref{RefineLem}.  Then since $\mathbf{P}_4+\mathbf{S}_2$ is obtained from $\mathbf{P}_3+\mathbf{S}_2$ by adding voters all of whom rank $b$ uniquely first, together $F(\mathbf{P}_3+\mathbf{S}_2)=\{b\}$ and positive involvement imply $b\in F(\mathbf{P}_4+\mathbf{S}_2)$. Then by resolvability, there is some ``empty'' or single-voter profile $\mathbf{S}_4$ such that $F(\mathbf{P}_4+\mathbf{S}_2+\mathbf{S}_4)=\{b\}$. Since $\mathbf{P}_4+\mathbf{S}_2+\mathbf{S}_4$ is obtained from $\mathbf{P}_5+\mathbf{S}_2+\mathbf{S}_4$ by adding voters all of whom rank $d$ uniquely first,  together $F(\mathbf{P}_4+\mathbf{S}_2+\mathbf{S}_4)=\{b\}$ and positive involvement imply $d\not\in F(\mathbf{P}_5+\mathbf{S}_2+\mathbf{S}_4)$. Finally, since any difference between weights of two distinct edges in $\mathcal{M}_5$ is of size at least $6$, no matter which single-voter (or empty) profiles $\mathbf{S}_2$ and $\mathbf{S}_4$ are, $d$ is the only defensible candidate in $\mathbf{P}_5+\mathbf{S}_2+\mathbf{S}_4$, and $\mathbf{P}_5+\mathbf{S}_2+\mathbf{S}_4$ satisfies the margin separation assumption of Lemma~\ref{RefineLem}, so $F(\mathbf{P}_5+\mathbf{S}_2+\mathbf{S}_4)=\{d\}$ by Lemma~\ref{RefineLem}. Therefore, we have a contradiction.\end{proof}

\begin{remark}Inspection of the proof of Theorem~\ref{MainThm} shows that resolvability can be replaced by \textit{quasi-resoluteness} (in the terminology of \citealt{HP2023}): $|F(\mathbf{P})|=1$ for any profile $\mathbf{P}$ that is \textit{uniquely weighted} in the sense that for all  $x,y,x',y'\in X(\mathbf{P})$ with $x\neq y$ and $x'\neq y'$, if $(x,y)\neq (x',y')$, then $\text{Margin}_\mathbf{P}(x,y)\neq \text{Margin}_\mathbf{P}(x',y')$. It also shows that resolvability can be replaced by \textit{singleton positive involvement} (which Richelson \citeyearpar{Richelson1978} calls \textit{voter adaptability}), according to which $F(\mathbf{P})=\{x\}$ implies $F(\mathbf{P}+\mathbf{P}^x)=\{x\}$ (the analogous point applies to Theorem~\ref{MainThmNI} and what Brandt et al.~\citeyearpar{Brandt2025} call \textit{singleton negative involvement}).\footnote{A slight modification of the proof of Theorem~\ref{MainThm} shows that positive involvement can be replaced by the weaker axiom that if $F(\mathbf{P})=\{x\}$ and $\mathbf{P}^x$ is a profile of new voters all of whom rank $x$ uniquely first, then $x\in F(\mathbf{P}+\mathbf{P}^x)$. First, a variant of Lemma~\ref{RefineLem} holds with the weaker axiom in place of positive involvement, provided we assume $F$ satisfies resolvability and that $\text{Margin}_\mathbf{P}(x,x')>\text{Margin}_\mathbf{P}(y,y')$ implies $\text{Margin}_\mathbf{P}(x,x')>\text{Margin}_\mathbf{P}(y,y')+3$. Now double each voter in each profile in the proof of Theorem~\ref{MainThm}, so all margins double. With the weaker axiom, we derive $F(\mathbf{P}_3+\mathbf{S}_2)\neq \{d\}$ and then $b\in F(\mathbf{P}_3+\mathbf{S}_2)$ by the variant of Lemma~\ref{RefineLem}. Then by resolvability, we add an ``empty'' or single-voter $\mathbf{S}_3$ such that $F(\mathbf{P}_3+\mathbf{S}_2+\mathbf{S}_3)=\{b\}$. From here the proof proceeds as before but to the conclusion that $F(\mathbf{P}_5+\mathbf{S}_2+\mathbf{S}_3+\mathbf{S}_4)\neq \{d\}$. By the variant of Lemma~\ref{RefineLem}, which is still applicable due to the doubling of margins, we have  $F(\mathbf{P}_5+\mathbf{S}_2+\mathbf{S}_3+\mathbf{S}_4)=\{d\}$. Similarly, the negative involvement axiom in Theorem~\ref{MainThmNI} can be replaced by the weaker axiom that if $x\not\in F(\mathbf{P})$ and $\mathbf{P}^{\downarrow x}$ is a profile of new voters all of whom rank $x$ uniquely last, then $F(\mathbf{P}+\mathbf{P}^{\downarrow x})\neq\{x\}$.} This formulation with positive involvement and singleton positive involvement has an implication for probabilistic voting methods if we regard $F(\mathbf{P})$ as the support of a probability distribution: the Condorcet winner and loser criteria are jointly inconsistent with the axiom that adding a voter who ranks $x$ uniquely first never lowers $x$'s probability of winning (from positive to $0$ or from $1$ to less than~$1$).\end{remark}

The same proof strategy can be used to show that for each positive integer $n$, resolvability can be replaced by \textit{$n$-voter resolvability}: for every profile $\mathbf{P}$, if $|F(\mathbf{P})|>1$, then for each $x\in F(\mathbf{P})$, there is a profile $\mathbf{Q}$ with $|V(\mathbf{Q})|\leq n$, $V(\mathbf{P})\cap V(\mathbf{Q})=\varnothing$, and $X(\mathbf{P})=X(\mathbf{Q})$ such that $F(\mathbf{P}+\mathbf{Q})=\{x\}$.\footnote{We could instead consider \textit{$r$-resolvability}: for some $N$ and every $\mathbf{P}$ with $|V(\mathbf{P})|\geq N$, if $|F(\mathbf{P})|>1$, then for each $x\in F(\mathbf{P})$, there is a profile $\mathbf{Q}$ of new voters with $|V(\mathbf{Q})|\leq r \cdot |V(\mathbf{P})|$ such that $F(\mathbf{P}+\mathbf{Q})=\{x\}$. A natural question is: how small can one make $r$ such that $r$-resolvability is consistent with the other axioms? By the proof of Theorem~\ref{MainThm} and a scaling argument as in the proof of Theorem~\ref{nvoter}, $r>1/262$, as $|V(\mathbf{P}_4+\mathbf{S}_2)|\leq 262$ (though this can be improved, e.g., to $1/231$ by Footnote \ref{P1note}).}

\begin{theorem}\label{nvoter} For each positive integer $n$, there is no voting method satisfying the Condorcet winner and loser criteria, positive involvement, and $n$-voter resolvability.
\end{theorem}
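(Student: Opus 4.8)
The plan is to run the proof of Theorem~\ref{MainThm} essentially verbatim on an $n$-fold rescaling of the construction, chosen so that the two margin gaps exploited there — size $4$ in $\mathcal{M}_3$ and size $6$ in $\mathcal{M}_5$ — become large enough to absorb perturbations by up to $n$, respectively up to $2n$, added voters. Concretely, I would let $\mathbf{P}_1^{\ast}$ be the profile consisting of $n$ disjoint copies of each voter of $\mathbf{P}_1$; its margin graph is then $n\cdot\mathcal{M}_1$, and since scaling by the positive factor $n$ preserves every sign and every strict inequality among margins, $a$ is still the unique defensible non-Condorcet-loser, so $F(\mathbf{P}_1^{\ast})=\{a\}$ by the Condorcet loser criterion and Lemma~\ref{RefineLem}. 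I would then define $\mathbf{P}_2^{\ast},\dots,\mathbf{P}_5^{\ast}$ by the same transitions with every block multiplied by $n$: add $26n$ voters $adbec$, subtract $7n$ voters $daceb$, add $23n$ voters $bdeac$, subtract $7n$ voters $dbace$. The subtractions remain legitimate because $\mathbf{P}_1^{\ast}$ contains $69n\ge 7n$ voters $daceb$ and $18n\ge 7n$ voters $dbace$, the added blocks still rank $a$, $d$, $b$, $d$ (respectively) uniquely first as the positive-involvement steps require, and the resulting margin graphs are exactly $n\cdot\mathcal{M}_2,\dots,n\cdot\mathcal{M}_5$.

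With this construction the logical skeleton is unchanged. Positive involvement gives $a\in F(\mathbf{P}_2^{\ast})$; $n$-voter resolvability supplies a profile $\mathbf{S}_2$ (possibly empty) with $|V(\mathbf{S}_2)|\le n$ and $F(\mathbf{P}_2^{\ast}+\mathbf{S}_2)=\{a\}$; positive involvement forces $d\notin F(\mathbf{P}_3^{\ast}+\mathbf{S}_2)$; Lemma~\ref{RefineLem} then yields $F(\mathbf{P}_3^{\ast}+\mathbf{S}_2)=\{b\}$; positive involvement gives $b\in F(\mathbf{P}_4^{\ast}+\mathbf{S}_2)$; $n$-voter resolvability supplies $\mathbf{S}_4$ with $|V(\mathbf{S}_4)|\le n$ and $F(\mathbf{P}_4^{\ast}+\mathbf{S}_2+\mathbf{S}_4)=\{b\}$; positive involvement forces $d\notin F(\mathbf{P}_5^{\ast}+\mathbf{S}_2+\mathbf{S}_4)$; and a final application of Lemma~\ref{RefineLem} yields $F(\mathbf{P}_5^{\ast}+\mathbf{S}_2+\mathbf{S}_4)=\{d\}$, the contradiction. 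The only non-formal steps are the two invocations of Lemma~\ref{RefineLem}, which require controlling the perturbed profiles $\mathbf{P}_3^{\ast}+\mathbf{S}_2$ and $\mathbf{P}_5^{\ast}+\mathbf{S}_2+\mathbf{S}_4$ no matter what the $\mathbf{S}_i$ are.

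This is where the rescaling does its work and is the main thing to verify. Since $\mathbf{S}_2$ has at most $n$ voters, each margin of $\mathbf{P}_3^{\ast}+\mathbf{S}_2$ differs from the corresponding margin of $n\cdot\mathcal{M}_3$ by at most $n$, so any comparison of two margins — in particular each comparison of $\max_z\text{Margin}(z,y)$ versus $\text{Margin}(y,x)$ governing defensibility, and each comparison deciding the margin-separation hypothesis — shifts by at most $2n$. Because the ten edge weights of $\mathcal{M}_3$ are pairwise distinct with all gaps at least $4$, in $n\cdot\mathcal{M}_3$ all distinct margins are at least $4n$ apart, and $4n>2n$ for every $n\ge 1$; hence every strict comparison survives and every pair of distinct margins stays at least $2$ apart, so $D(\mathbf{P}_3^{\ast}+\mathbf{S}_2)\subseteq\{b,d\}$ and the hypothesis of Lemma~\ref{RefineLem} holds. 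Identical reasoning applies to $\mathbf{P}_5^{\ast}+\mathbf{S}_2+\mathbf{S}_4$: the perturbation now comes from at most $2n$ voters, so comparisons shift by at most $4n$, while the gaps of $\mathcal{M}_5$, all at least $6$, give distinct margins at least $6n$ apart in $n\cdot\mathcal{M}_5$, and $6n>4n$ for every $n\ge 1$; so $d$ remains the unique defensible candidate and the separation hypothesis again holds.

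I expect the main obstacle to be precisely this robustness of the defensible-set computation and of the margin-separation hypothesis under an arbitrary perturbation by up to $2n$ voters. The structural facts that make the scaling clean are that all ten edge weights are distinct in each of $\mathcal{M}_3$ and $\mathcal{M}_5$ — so that after scaling every pair of distinct margins is uniformly separated, by $4n$ and $6n$ respectively — and that the unscaled minimal gaps $4$ and $6$ are exactly $2n+2$ and $4n+2$ at $n=1$, matching the bounds $2n$ and $4n$ on the perturbation of a comparison. Any factor $c$ with $4c\ge 2n+2$ and $6c\ge 4n+2$ would suffice; $c=n$ is a convenient uniform choice that keeps both inequalities valid for all $n\ge 1$ and recovers the proof of Theorem~\ref{MainThm} as the case $n=1$.
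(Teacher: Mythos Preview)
Your proposal is correct and takes essentially the same approach as the paper: scale each profile $\mathbf{P}_i$ by replacing every voter with $n$ copies, so that all margins are multiplied by $n$, and then rerun the proof of Theorem~\ref{MainThm} with $|V(\mathbf{S}_2)|\le n$ and $|V(\mathbf{S}_4)|\le n$. The paper's proof is a single sentence to this effect; your expansion of the robustness checks for $D(\mathbf{P}_3^{\ast}+\mathbf{S}_2)\subseteq\{b,d\}$ and $D(\mathbf{P}_5^{\ast}+\mathbf{S}_2+\mathbf{S}_4)=\{d\}$, and for the margin-separation hypothesis of Lemma~\ref{RefineLem}, correctly fills in what the paper leaves implicit.
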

\begin{proof} Replace each voter in  $\mathbf{P}_1,\dots,\mathbf{P}_5$ by $n$ copies of that voter, which results in a multiplication of each margin in $\mathcal{M}_1,\dots,\mathcal{M}_5$ by $n$, and then the proof from Theorem \ref{MainThm} goes through as before but now with $|V(\mathbf{S}_2)|\leq n$ and $|V(\mathbf{S}_4)|\leq n$.\end{proof}

These proofs can also be adapted to use \textit{negative involvement} instead of positive involvement: for every profile $\mathbf{P}$, $x\in X(\mathbf{P})$, and profile $\mathbf{P}^{\mathord{\downarrow}x}$ with $V(\mathbf{P})\cap V(\mathbf{P}^{\mathord{\downarrow}x})=\varnothing$ and $X(\mathbf{P})=X(\mathbf{P}^{\mathord{\downarrow}x})$, if $x\not\in F(\mathbf{P})$ and $\mathbf{P}^{\mathord{\downarrow}x}$ consists of a single voter who ranks $x$ uniquely last, then $x\not\in F(\mathbf{P}+\mathbf{P}^{\mathord{\downarrow}x})$.\footnote{There was no need for a separate proof for negative involvement in \citealt{Holliday2024}, since relative to the additional assumption of ordinal margin invariance in that paper, negative involvement is equivalent to positive involvement. However, without ordinal margin invariance, they are not equivalent. For example, Young's \citeyearpar{Young1977} voting method satisfies negative involvement (as well as the Condorcet winner criterion and resolvability) but not positive involvement (see \citealt{Perez2001}, \citealt{Kasper2019}).} Contrapositively, if $x\in F(\mathbf{P}+\mathbf{P}^{\mathord{\downarrow}x})$, then $x\in F(\mathbf{P})$.

\begin{theorem}\label{MainThmNI} For each positive integer $n$, there is no voting method satisfying the Condorcet winner and loser criteria, negative involvement, and $n$-voter resolvability.\end{theorem}

\begin{proof} We give the proof for $n=1$. A scaling argument exactly analogous to that in the proof of Theorem~\ref{nvoter} handles $n>1$.  First, observe that a variant of Lemma \ref{RefineLem} holds for negative involvement (cf.~\citealt[Lemma 1]{Perez2001}): Let $F$ be a voting method satisfying negative involvement and the Condorcet winner criterion. Let $\mathbf{P}$  be a profile in which for all $x,x',y,y'\in X(\mathbf{P})$ with $x\neq x'$ and $y\neq y'$, if $\text{Margin}_\mathbf{P}(x,x')>\text{Margin}_\mathbf{P}(y,y')$, then $\text{Margin}_\mathbf{P}(x,x')>\text{Margin}_\mathbf{P}(y,y')+1$. Additionally assume that for each $x\in X(\mathbf{P})\setminus D(\mathbf{P})$, there is a $y\in X(\mathbf{P})$ such that for every $z\in X(\mathbf{P})$, $\text{Margin}_\mathbf{P}(z,y)< \text{Margin}_\mathbf{P}(y,x)$, and where $k= \max\{\text{Margin}_\mathbf{P}(z,y) \mid z\in X(\mathbf{P}),z\neq y\}$, there are at least $k+1$ voters whose rankings have $y$ uniquely second-to-last, followed by $x$ uniquely last. Then $F(\mathbf{P})\subseteq D(\mathbf{P})$. The proof is the same as that of Lemma~\ref{RefineLem} except we obtain $\mathbf{P}'$ from $\mathbf{P}$ by \textit{removing} $k+1$ voters of the kind described above. Then since the proof of Lemma~\ref{RefineLem} begins with $x\in F(\mathbf{P})$, repeated application of negative involvement in its contrapositive form implies $x\in F(\mathbf{P}')$. Removing a ranking $L$ has the same effect on margins as adding the reversed ranking $L^{-1}$, so as in the proof of Lemma~\ref{RefineLem}, $y$ is the Condorcet winner in $\mathbf{P}'$, yielding a contradiction.

Now we adapt the proof of Theorem \ref{MainThm} by adding 147 copies of each linear order of $X(\mathbf{P}_1)$ to $\mathbf{P}_1$, which ensures that the variant of Lemma~\ref{RefineLem} is applicable when needed in the proof.\footnote{By starting with at least $147$ copies of each linear order in $\mathbf{P}_1$, we ensure that there are at least $121$ copies of each linear order in $\mathbf{P}_3+\mathbf{S}_2$ and in $\mathbf{P}_5+\mathbf{S}_2+\mathbf{S}_4$, which is greater than the maximal margins in those profiles, so one immediately sees that the variant of Lemma~\ref{RefineLem} is applicable to those profiles, without any reasoning about the numbers of rankings of specific types (reasoning that could be used to replace $147$ with a smaller number). Of course, one can use an integer linear program to synthesize a new starting profile with far fewer voters (indeed, $477$ voters) such that all constraints needed for the proof~still~hold.} Then instead of adding 26 voters with the ranking $adbec$ to obtain $\mathbf{P}_2$, we  \textit{remove} 26 voters with $cebda$; instead of removing 7 voters with  $daceb$ from $\mathbf{P}_2$ to obtain $\mathbf{P}_3$, we \textit{add} 7 voters with $becad$; and so on for $\mathbf{P}_4$ (removing 23 voters with $caedb$) and $\mathbf{P}_5$ (adding 7 voters with $ecabd$). The margins of $\mathcal{M}_1,\dots,\mathcal{M}_5$ remain as in the proof of Theorem~\ref{MainThm}, and then the rest of the proof is the same except that we (i) invoke the variant of Lemma~\ref{RefineLem} for steps involving defensibility and (ii) use negative involvement instead of positive involvement.\end{proof}

Finally, in response to a previous version of this note with the results above, Dominik Peters (personal communication) suggested that analogous impossibilities might hold with \textit{independence of clones} (\citealt{Tideman1987}) in place of the Condorcet loser criterion. Indeed, we will confirm below that such a swap in Theorems~\ref{MainThm}~and~\ref{nvoter} is possible, as well as in Theorem \ref{MainThmNI} if we add the axiom of \textit{block preservation} (\citealt{HP2025}): if the profile $\mathbf{L}$ consists of exactly one copy of each linear order of $X(\mathbf{P})$, then $F(\mathbf{P})\subseteq F(\mathbf{P}+\mathbf{L})$. 

A set $C\subsetneq X(\mathbf{P})$  with $|C|\geq 2$ is a set of \textit{clones} in $\mathbf{P}$ if for each $i\in V(\mathbf{P})$, $x\in X(\mathbf{P})\setminus C$, and $c,c'\in C$, if $i$ ranks $x$ above (resp.~below) $c$, then $i$ ranks $x$ above (resp.~below) $c'$. Given a profile $\mathbf{P}$ and $c\in X(\mathbf{P})$, let $\mathbf{P}_{-c}$ be the profile with $X(\mathbf{P}_{-c})=X(\mathbf{P})\setminus \{c\}$ and $V(\mathbf{P}_{-c})=V(\mathbf{P})$ where for each $i\in V(\mathbf{P})$, $\mathbf{P}_{-c}(i)$ is the restriction of $\mathbf{P}(i)$ to $X(\mathbf{P}_{-c})$. Then $F$ satisfies independence of clones if for any profile $\mathbf{P}$, set $C$ of clones, $c\in C$, and $x\in X(\mathbf{P})\setminus C$, (i) $x\in F(\mathbf{P})$ iff $x\in F(\mathbf{P}_{-c})$, and (ii) $C\cap F(\mathbf{P})\neq\varnothing$ iff $C\cap F(\mathbf{P}_{-c})\neq\varnothing$.

\begin{theorem}\label{CloneThm}$\,$
\begin{enumerate}
\item\label{CloneThm1} For each positive integer $n$, there is no voting method satisfying the Condorcet winner criterion, independence of clones, positive involvement, and $n$-voter resolvability. 
\item\label{CloneThm2} The same holds with positive involvement replaced by negative involvement plus block preservation.
\end{enumerate}
\end{theorem}

\begin{proof} We prove the $n=1$ case. Reasoning as in the proof of Theorem~\ref{nvoter} handles $n>1$. 

For part~\ref{CloneThm1}, consider a profile $\mathbf{Q}_1$ as in Figure~\ref{Q1}, wherein $\{a,b,c,e\}$ is a set of clones. Its margin graph is shown as $\mathcal{M}_1$ in Figure~\ref{KeyFig2}. By Lemma~\ref{RefineLem}, $F(\mathbf{Q}_1)\subseteq \{a,d\}$. By the Condorcet winner criterion, $F(\mathbf{Q}_{1-b})=\{a\}$, so by independence of clones (part (i) or (ii)) and $F(\mathbf{Q}_1)\subseteq \{a,d\}$, we have $a\in F(\mathbf{Q}_1)$. Then obtain $\mathbf{Q}_2,\dots,\mathbf{Q}_5$ from $\mathbf{Q}_1$ by adding and subtracting the profiles in Figure~\ref{KeyFig2}. The rest of the proof is just like that of Theorem~\ref{MainThm} but with $\mathbf{Q}_i$ instead of~$\mathbf{P}_i$. 

For part~\ref{CloneThm2}, by block preservation we can add as many blocks of all linear orders as we wish to $\mathbf{Q}_1$ to obtain a $\mathbf{Q}_1^+$ with $F(\mathbf{Q}_1)\subseteq F(\mathbf{Q}_1^+)\subseteq \{a,d\}$ by the variant of Lemma~\ref{RefineLem} in the proof of Theorem~\ref{MainThmNI}. Hence $F(\mathbf{Q}_1)\subseteq \{a,d\}$, which implies $a\in F(\mathbf{Q}_1)$ as above for part~\ref{CloneThm1} and therefore $a\in F(\mathbf{Q}_1^+)$. Then starting from $\mathbf{Q}_1^+$, we reason with negative involvement instead of positive involvement as in the proof of Theorem~\ref{MainThmNI}.\end{proof}

\noindent Thus, the tradeoff noted in Section~\ref{Intro} is shown by several inconsistent sets of axioms, summarized in Table~\ref{SummaryTable}.

\begin{figure}[h]
\begin{center}
  \begin{tabular}{cccccc}
  62 & 60 & 42 & 23 & 19 & 3\\
  \hline
  $d$ & $c$ & $d$ & $a$ & $e$ & $c$ \\
  $b$ & $b$ & $e$ & $e$ & $c$ & $e$ \\
  $a$ & $a$ & $a$ & $c$ & $b$ & $b$ \\
  $e$ & $e$ & $c$ & $b$ & $a$ & $a$ \\
  $c$ & $d$ & $b$ & $d$ & $d$ & $d$ \\
  \end{tabular}
  \end{center}
\caption{the profile $\mathbf{Q}_1$ used in the proof of Theorem \ref{CloneThm}.}\label{Q1}
\end{figure}

\begin{figure}[h!]
 \begin{center}

  \begin{minipage}{1.5in}
  \begin{center}
  $\mathcal{M}_1$
  \end{center}
  \begin{tikzpicture}

  \node[circle,draw,fill=gray!75, minimum width=0.25in] at (1.5,3) (a) {$a$};
  \node[circle,draw,minimum width=0.25in] at (3,1.5) (b) {$b$};
  \node[circle,draw,minimum width=0.25in] at (2.3,-0.5) (c) {$c$};
  \node[circle,draw,fill=gray!75, minimum width=0.25in] at (0.7,-0.5) (d) {$d$};
  \node[circle,draw,minimum width=0.25in] at (0,1.5) (e) {$e$};

  \path[->,draw,thick] (a) to node[fill=white,inner sep=1pt,pos=0.55] {\small $45$} (c);
  \path[->,draw,thick] (a) to node[fill=white,inner sep=1pt,pos=0.55] {\small $1$} (d);
  \path[->,draw,thick] (a) to node[fill=white,inner sep=1pt] {\small $81$} (e);
  \path[->,draw,thick] (b) to node[fill=white,inner sep=1pt] {\small $79$} (a);
  \path[->,draw,thick] (b) to node[fill=white,inner sep=1pt,pos=0.55] {\small $1$} (d);
  \path[->,draw,thick] (b) to node[fill=white,inner sep=1pt] {\small $35$} (e);
  \path[->,draw,thick] (c) to node[fill=white,inner sep=1pt] {\small $85$} (b);
  \path[->,draw,thick] (c) to node[fill=white,inner sep=1pt] {\small $1$} (d);
  \path[->,draw,thick] (e) to node[fill=white,inner sep=1pt,pos=0.55] {\small $83$} (c);
  \path[->,draw,thick] (e) to node[fill=white,inner sep=1pt] {\small $1$} (d);

  \end{tikzpicture}

  \end{minipage}
$\;+$\begin{tabular}{ccc}
  16 & 7 & 2 \\
  \hline
  $a$ & $a$ & $a$ \\
  $d$ & $e$ & $e$ \\
  $b$ & $d$ & $b$ \\
  $c$ & $b$ & $d$ \\
  $e$ & $c$ & $c$
\end{tabular}$=\;\;$  \begin{minipage}{1.5in}
  \begin{center}
  $\mathcal{M}_2$
  \end{center}
  \begin{tikzpicture}

  \node[circle,draw,fill=gray!75, minimum width=0.25in] at (1.5,3) (a) {$a$};
  \node[circle,draw,fill=gray!75, minimum width=0.25in] at (3,1.5) (b) {$b$};
  \node[circle,draw,minimum width=0.25in] at (2.3,-0.5) (c) {$c$};
  \node[circle,draw,fill=gray!75, minimum width=0.25in] at (0.7,-0.5) (d) {$d$};
  \node[circle,draw,minimum width=0.25in] at (0,1.5) (e) {$e$};

  \path[->,draw,thick] (a) to node[fill=white,inner sep=1pt,pos=0.55] {\small $70$} (c);
  \path[->,draw,thick] (a) to node[fill=white,inner sep=1pt,pos=0.55] {\small $26$} (d);
  \path[->,draw,thick] (a) to node[fill=white,inner sep=1pt] {\small $106$} (e);
  \path[->,draw,thick] (b) to node[fill=white,inner sep=1pt] {\small $54$} (a);
  \path[->,draw,thick] (b) to node[fill=white,inner sep=1pt] {\small $42$} (e);
  \path[->,draw,thick] (c) to node[fill=white,inner sep=1pt] {\small $60$} (b);
  \path[->,draw,thick] (d) to node[fill=white,inner sep=1pt,pos=0.45] {\small $20$} (b);
  \path[->,draw,thick] (d) to node[fill=white,inner sep=1pt] {\small $24$} (c);
  \path[->,draw,thick] (d) to node[fill=white,inner sep=1pt] {\small $6$} (e);
  \path[->,draw,thick] (e) to node[fill=white,inner sep=1pt,pos=0.55] {\small $76$} (c);

  \end{tikzpicture}

  \end{minipage}
$=$\begin{tabular}{c}
  5 \\
  \hline
  $d$ \\
  $e$ \\
  $a$ \\
  $c$ \\
  $b$
\end{tabular}$+\;\;$  \begin{minipage}{1.5in}
  \begin{center}
  $\mathcal{M}_3$
  \end{center}
  \begin{tikzpicture}

  \node[circle,draw,minimum width=0.25in] at (1.5,3) (a) {$a$};
  \node[circle,draw,fill=gray!75, minimum width=0.25in] at (3,1.5) (b) {$b$};
  \node[circle,draw,minimum width=0.25in] at (2.3,-0.5) (c) {$c$};
  \node[circle,draw,fill=gray!75, minimum width=0.25in] at (0.7,-0.5) (d) {$d$};
  \node[circle,draw,minimum width=0.25in] at (0,1.5) (e) {$e$};

  \path[->,draw,thick] (a) to node[fill=white,inner sep=1pt,pos=0.55] {\small $65$} (c);
  \path[->,draw,thick] (a) to node[fill=white,inner sep=1pt,pos=0.55] {\small $31$} (d);
  \path[->,draw,thick] (a) to node[fill=white,inner sep=1pt] {\small $111$} (e);
  \path[->,draw,thick] (b) to node[fill=white,inner sep=1pt] {\small $59$} (a);
  \path[->,draw,thick] (b) to node[fill=white,inner sep=1pt] {\small $47$} (e);
  \path[->,draw,thick] (c) to node[fill=white,inner sep=1pt] {\small $55$} (b);
  \path[->,draw,thick] (d) to node[fill=white,inner sep=1pt,pos=0.45] {\small $15$} (b);
  \path[->,draw,thick] (d) to node[fill=white,inner sep=1pt] {\small $19$} (c);
  \path[->,draw,thick] (d) to node[fill=white,inner sep=1pt] {\small $1$} (e);
  \path[->,draw,thick] (e) to node[fill=white,inner sep=1pt,pos=0.55] {\small $71$} (c);

  \end{tikzpicture}

  \end{minipage}

  \vspace{0.5cm}

  \begin{minipage}{1.5in}
  \begin{center}
  $\mathcal{M}_3$
  \end{center}
  \begin{tikzpicture}

  \node[circle,draw,minimum width=0.25in] at (1.5,3) (a) {$a$};
  \node[circle,draw,fill=gray!75, minimum width=0.25in] at (3,1.5) (b) {$b$};
  \node[circle,draw,minimum width=0.25in] at (2.3,-0.5) (c) {$c$};
  \node[circle,draw,fill=gray!75, minimum width=0.25in] at (0.7,-0.5) (d) {$d$};
  \node[circle,draw,minimum width=0.25in] at (0,1.5) (e) {$e$};

  \path[->,draw,thick] (a) to node[fill=white,inner sep=1pt,pos=0.55] {\small $65$} (c);
  \path[->,draw,thick] (a) to node[fill=white,inner sep=1pt,pos=0.55] {\small $31$} (d);
  \path[->,draw,thick] (a) to node[fill=white,inner sep=1pt] {\small $111$} (e);
  \path[->,draw,thick] (b) to node[fill=white,inner sep=1pt] {\small $59$} (a);
  \path[->,draw,thick] (b) to node[fill=white,inner sep=1pt] {\small $47$} (e);
  \path[->,draw,thick] (c) to node[fill=white,inner sep=1pt] {\small $55$} (b);
  \path[->,draw,thick] (d) to node[fill=white,inner sep=1pt,pos=0.45] {\small $15$} (b);
  \path[->,draw,thick] (d) to node[fill=white,inner sep=1pt] {\small $19$} (c);
  \path[->,draw,thick] (d) to node[fill=white,inner sep=1pt] {\small $1$} (e);
  \path[->,draw,thick] (e) to node[fill=white,inner sep=1pt,pos=0.55] {\small $71$} (c);

  \end{tikzpicture}

  \end{minipage}$\,\;\;\;\,+\;\;\;$\begin{tabular}{c}
  20 \\
  \hline
  $b$ \\
  $d$ \\
  $c$ \\
  $a$ \\
  $e$
\end{tabular}$\;\;\;=\;\;\;\;\;\,\,$  \begin{minipage}{1.5in}
  \begin{center}
  $\mathcal{M}_4$
  \end{center}
  \begin{tikzpicture}

  \node[circle,draw,minimum width=0.25in] at (1.5,3) (a) {$a$};
  \node[circle,draw,fill=gray!75, minimum width=0.25in] at (3,1.5) (b) {$b$};
  \node[circle,draw,minimum width=0.25in] at (2.3,-0.5) (c) {$c$};
  \node[circle,draw,fill=gray!75, minimum width=0.25in] at (0.7,-0.5) (d) {$d$};
  \node[circle,draw,minimum width=0.25in] at (0,1.5) (e) {$e$};

  \path[->,draw,thick] (a) to node[fill=white,inner sep=1pt,pos=0.55] {\small $45$} (c);
  \path[->,draw,thick] (a) to node[fill=white,inner sep=1pt,pos=0.55] {\small $11$} (d);
  \path[->,draw,thick] (a) to node[fill=white,inner sep=1pt] {\small $131$} (e);
  \path[->,draw,thick] (b) to node[fill=white,inner sep=1pt] {\small $79$} (a);
  \path[->,draw,thick] (b) to node[fill=white,inner sep=1pt,pos=0.55] {\small $5$} (d);
  \path[->,draw,thick] (b) to node[fill=white,inner sep=1pt] {\small $67$} (e);
  \path[->,draw,thick] (c) to node[fill=white,inner sep=1pt] {\small $35$} (b);
  \path[->,draw,thick] (d) to node[fill=white,inner sep=1pt] {\small $39$} (c);
  \path[->,draw,thick] (d) to node[fill=white,inner sep=1pt] {\small $21$} (e);
  \path[->,draw,thick] (e) to node[fill=white,inner sep=1pt,pos=0.55] {\small $51$} (c);

  \end{tikzpicture}

  \end{minipage}
$=$\begin{tabular}{c}
  11 \\
  \hline
  $d$ \\
  $b$ \\
  $a$ \\
  $e$ \\
  $c$
\end{tabular}$+\;\;$  \begin{minipage}{1.5in}
  \begin{center}
  $\mathcal{M}_5$
  \end{center}
  \begin{tikzpicture}

  \node[circle,draw,minimum width=0.25in] at (1.5,3) (a) {$a$};
  \node[circle,draw,minimum width=0.25in] at (3,1.5) (b) {$b$};
  \node[circle,draw,minimum width=0.25in] at (2.3,-0.5) (c) {$c$};
  \node[circle,draw,fill=gray!75, minimum width=0.25in] at (0.7,-0.5) (d) {$d$};
  \node[circle,draw,minimum width=0.25in] at (0,1.5) (e) {$e$};

  \path[->,draw,thick] (a) to node[fill=white,inner sep=1pt,pos=0.55] {\small $34$} (c);
  \path[->,draw,thick] (a) to node[fill=white,inner sep=1pt,pos=0.55] {\small $22$} (d);
  \path[->,draw,thick] (a) to node[fill=white,inner sep=1pt] {\small $120$} (e);
  \path[->,draw,thick] (b) to node[fill=white,inner sep=1pt] {\small $68$} (a);
  \path[->,draw,thick] (b) to node[fill=white,inner sep=1pt,pos=0.55] {\small $16$} (d);
  \path[->,draw,thick] (b) to node[fill=white,inner sep=1pt] {\small $56$} (e);
  \path[->,draw,thick] (c) to node[fill=white,inner sep=1pt] {\small $46$} (b);
  \path[->,draw,thick] (d) to node[fill=white,inner sep=1pt] {\small $28$} (c);
  \path[->,draw,thick] (d) to node[fill=white,inner sep=1pt] {\small $10$} (e);
  \path[->,draw,thick] (e) to node[fill=white,inner sep=1pt,pos=0.55] {\small $40$} (c);

  \end{tikzpicture}

  \end{minipage}

  \end{center}
\caption{margin graphs used in the proof of Theorem~\ref{CloneThm} with defensible candidates shaded gray.}\label{KeyFig2}
\end{figure}

\begin{table}[h!]
\begin{center}
{\small\setlength{\tabcolsep}{5pt}
\begin{tabular}{l|ccccccc}
 & Condorcet & Condorcet & positive & negative & resolvability & independence & block \\
& winner & loser & involvement & involvement & & of clones & preservation  \\
\hline 
Theorem \ref{MainThm} & $\circleddash$ & $\circleddash$ & $\circleddash$ & & $\circleddash$ &  \\
\hline
Theorem \ref{MainThmNI} & $\circleddash$ & $\circleddash$ & & $\circleddash$ & $\circleddash$\\
\hline
Theorem \ref{CloneThm}.\ref{CloneThm1} & $\circleddash$ & & $\circleddash$ & & $\circleddash$ & $\circleddash$ \\
\hline
Theorem \ref{CloneThm}.\ref{CloneThm2} &  $\circleddash$ &  &  & $\circleddash$ & $\circleddash$ & $\circleddash$ & $\circleddash$  \\
\end{tabular}}
\end{center}
\caption{summary of the main impossibility theorems. For each theorem, the axioms that are jointly inconsistent are indicated by circled marks in the row for that theorem. In each case, resolvability can be replaced by $n$-voter resolvability.}\label{SummaryTable}
\end{table}

\newpage

\subsection*{Acknowledgements}

I thank Diego Bejarano and Emma Boniface for helpful discussions of the problem posed at the end of \citealt{Holliday2024}, as well as Dominik Peters and an anonymous referee for helpful feedback on this paper.

\bibliographystyle{plainnat}
\bibliography{social}
  
\end{document}